\newtheorem{theorem}{Theorem}
\newtheorem{lemma}{Lemma}
\newenvironment{proof}[1][Proof]{\noindent\textbf{#1.} }{\ \rule{0.5em}{0.5em}}
\journal{ArXiv}
\begin{document}

\begin{frontmatter}
	
\title{Discrimination in Heterogeneous Games}
\author[UPV1,IKE]{Annick Laruelle\corref{cor1}}
\ead{annick.laruelle@ehu.eus}
\author[PUC]{Andr\'e Rocha}

\address[UPV1]{Department of Economic Analysis (ANEKO), University of the Basque Country
	(UPV/EHU);\\ Avenida Lehendakari Aguirre, 83, E-48015 Bilbao, Spain.}
\address[IKE]{IKERBASQUE, Basque Foundation of Science, 48011, Bilbao, Spain.}
\address[PUC]{Department of Industrial Engineering,	Pontifical Catholic University of Rio de Janeiro,\\
	Rua Marqu\^es de S\~ao Vicente 225, G\'avea, CEP22451-900, Rio de Janeiro, RJ, Brazil.}

\cortext[cor1]{Corresponding author}
	
	\begin{abstract}
		In this paper, we consider coordination and anti-coordination heterogeneous games played by
		a finite population formed by different types of individuals who fail to
		recognize their own type but do observe the type of their opponent. We show
		that there exists symmetric Nash equilibria in which players discriminate by acting 
		differently according to the type of opponent that they face in anti-coordination games, while no
		such equilibrium exists in coordination games. Moreover, discrimination 
		has a limit: the maximum number of groups where the treatment differs is three. We then 
		discuss the theoretical results in light of the observed behavior of people in some specific 
		psychological contexts. 
	\end{abstract}

\begin{keyword}
	Discrimination\sep diversity\sep working memory.
\end{keyword}

\end{frontmatter}

\section{Introduction}

Heterogeneous games have been studied in I\~{n}arra and Laruelle (2012),
Barreira da Silva Rocha and Laruelle (2013) and I\~{n}arra, Laruelle and
Zuazo-Garin (2015). These are games where the population is divided into
types. A type is some particular feature that makes the individual to be
perceived as different by the other individuals in the population. Examples
could be a genotype, phenotype or behavior. The division is artificial in
the sense that individuals have the same capacities and play the same game.
But types can be distinguished and players can adapt their behavior
according to the type of the opponent. The second characteristic of such
heterogeneous games is that individuals lack self-perception (that is, they
do not know their type) but recognize the type of the opponent.

In this paper, we study how individuals discriminate different types of
opponents in symmetric Nash equilibria of heterogeneous games. We show that
in coordination games players do not behave differently when they face
different types of opponents. By contrast in anti-coordination games,
equilibria in which players behave differently according to the type of
opponent arise. In other words, an artificial division of the population may
generate a real discrimination.

In anti-coordination games with three types, there are three kinds of
equilibria: non discriminating equilibria, partially discriminating
equilibria (where two types are treated equally and one is treated
differently) and totally discriminating equilibria (where each of the three
types is treated differently). The following question is whether this can be
generalized when there are more than three types. The answer is negative:
the maximum number of groups where the treatment differs is three. That is,
discrimination has a limit: there is no totally discriminating equilibrium
when the population is divided into four or more types.

The rest of the paper is organized as follows: in Section 2 we present the
model; in Section 3, we obtain all possible symmetric Nash equilibria and
Section 4 concludes with a discussion.

\section{Heterogeneous games with m types}

Consider a population of $n$ individuals of $m\geq 2$ different types. Let $%
T $ be the set of types. The proportion of individuals of type $t$ in $T$ is
given by $x_{t}$ ($\sum_{t\in T}x_{t}=1$) with $1<nx_{t}<n$. That is, we
assume that there is strictly more than one individual of each type.

Any pair of individuals (whatever their types) plays the same symmetric game
with the normalized matrix\footnote{%
Details of this normalization are given in Eichberger, Haller and Milne
(1993).}: 
\begin{equation}
\begin{tabular}{c|cc}
& Cooperate & Defect \\ \hline
Cooperate & $0$ & $y$ \\ 
Defect & $z$ & $0$%
\end{tabular}
\label{Matrix}
\end{equation}
This corresponds to a matrix of coordination (it is better if players choose
the same action) if $y,z<0$ and a matrix of anti-coordination (it is better
if players choose different actions) if $y,z>0$. Let 
\begin{equation*}
\zeta =\frac{y}{y+z}.
\end{equation*}

The following assumptions are made concerning the types: (i) Players do not
know their own type; and (ii) Players recognize their opponents' type. In
this context, individuals cannot condition their behavior on their type. By
contrast they can choose different probabilities of playing ``cooperate''
when they face different types of opponents. A \textit{strategy} can thus be
represented by $(\alpha _{t})_{t\in T}$, where $\alpha _{t}$ gives the
probability of cooperation when facing an individual of type $t$.

A strategy $(\alpha _{t})_{t\in T}$ is said to be $l$-discriminating if
there exists an $l$-partition\footnote{$T=T_{1}\cup ..\cup T_{l}$ and $%
T_{j}\cap T_{k}=\emptyset $ for any $j\neq k$.} of $T$, $(T_{1},...T_{l})$
such that $\alpha _{s}=\alpha _{t}$ for any $s,t\in T_{k}$, and $\alpha
_{s}\neq \alpha _{t}$ if $s\in T_{j}$ and $t\in T_{k}$. In this case the
probability of cooperation when facing an individual of type $s\in T_{k}$ is
denoted $\alpha _{k}$. The special case $l=1$ corresponds to $\alpha
_{i}=\alpha $ for any $t\in T$ and the strategy is referred to as\textit{\
non-discriminating strategy as} the probability of cooperation is identical
whatever the type of the opponent. When $l=m$ the strategy is said to be 
\textit{totally discriminating} as the probability of cooperation is
different for each type of opponent: $\alpha _{i}\neq \alpha _{j}$ for any $%
i,j\in T$.

A pair of individuals is selected at random to play the game. This is
equivalent to assuming that an individual is randomly picked from a set of
size $n$, and then an opponent is picked from the remaining set of size $n-1$
. First consider that the individual is of type $i\in T$ and an opponent of
type $j\in T$. The probability of the encounter is denoted $p(i,j)$. We have 
\begin{equation}
p(i,j)=\frac{nx_{i}x_{j}}{n-1}\text{ for }i\neq j\text{ and }p(i,i)=\frac{
(nx_{i}-1)x_{i}}{n-1}\text{.}  \label{Probabilities}
\end{equation}
If an individual plays $(\alpha _{t})_{t\in T}$ while the opponent plays $%
(\beta _{t})_{t\in T}$, the individual sees that the opponent is of type $j$
(and plays $\alpha _{j}$) while the opponent sees that the individual is of
type $i$ (and plays $\beta _{i}$). The payoff obtained by the individual in
this encounter is obtained from matrix (\ref{Matrix}). It is given by 
\begin{equation*}
y\alpha _{j}(1-\beta _{i})+z(1-\alpha _{j})\beta _{i}=z\beta _{i}+\left[
y-(y+z)\beta _{i}\right] \alpha _{j}.
\end{equation*}
This payoff is weighted by the probability of occurrence of the encounter, $%
p(i,j)$; and the payoff of all possible encounters are summed to obtain the
total expected payoff of the individual.

The \textit{expected payoff} of an individual who plays $(\alpha _{t})_{t\in
T}$ while the opponent plays $(\beta _{t})_{t\in T}$ is denoted by $%
U((\alpha _{t},\beta _{t})_{t\in T})$. We have 
\begin{equation}
U((\alpha _{t},\beta _{t})_{t\in T})=\sum_{j\in T}\sum_{i\in T}p(i,j)\left[ z%
\text{ }\beta _{i}+y\text{ }\alpha _{j}-(y+z)\beta _{i}\alpha _{j}\right] .
\label{Ut}
\end{equation}%
Substituting (\ref{Probabilities}) in (\ref{Ut}), we obtain after some
algebra (see the Appendix for details): 
\begin{equation}
U((\alpha _{t},\beta _{t})_{t\in T})=z\sum_{j\in T}x_{j}\text{ }\beta
_{j}+\sum_{j\in T}\frac{x_{j}}{n-1}\left[ (n-1)y+(y+z)\beta
_{j}-n(y+z)\sum_{i\in T}x_{i}\beta _{i}\right] \alpha _{j}.  \label{Ut2}
\end{equation}%
The heterogeneous game is denoted $\Gamma (n,T,(x_{t})_{t\in T},y,z)$.

\section{Symmetric Nash equilibria with m-types}

Here we look for the symmetric Nash equilibria, that is equilibrium with $%
\alpha _{t}=\beta _{t}$ for any $t\in T$. We denote such an equilibrium by $%
(\alpha _{t}^{\ast })_{t\in T}$ (instead of $((\alpha _{t}^{\ast })_{t\in
T},(\alpha _{t}^{\ast })_{t\in T})$). Whenever the strategies at equilibrium
are $l$-discriminating, the equilibrium will be referred to as $l$%
-discriminating equilibrium. In particular we will refer to non
discriminating equilibrium and totally discriminating equilibrium.

To find the equilibria in game $\Gamma (n,T,(x_{t})_{t\in T},y,z)$ we
re-write (\ref{Ut2}) as 
\begin{equation*}
U((\alpha _{t},\beta _{t})_{t\in T})=z\sum_{j\in T}x_{j}\text{ }\beta
_{j}+\sum_{j\in T}\frac{x_{j}}{n-1}\mathcal{F}_{j}((\beta _{t})_{t\in
T})\alpha _{j}
\end{equation*}%
where 
\begin{equation}
\mathcal{F}_{j}((\beta _{t})_{t\in T})=(n-1)y+(y+z)\beta
_{j}-n(y+z)\sum_{i\in T}x_{i}\beta _{i}.  \label{FI(beta)}
\end{equation}

The individual chooses $(\alpha _{k})_{k\in T}$ in response to the opponent
playing $(\beta _{t})_{t\in T}$. When the individual sees that the opponent
is of type $k$ the best choice for $\alpha _{k}$ depends on the sign of $%
\mathcal{F}_{k}((\beta _{t})_{t\in T})$. An individual should choose $\alpha
_{k}=1$ when $\mathcal{F}_{k}((\beta _{t})_{t\in T})>0$ and $\alpha _{k}=0$
when $\mathcal{F}_{k}((\beta _{t})_{t\in T})<0$. If $\mathcal{F}_{k}((\beta
_{t})_{t\in T})=0$ any $\alpha _{k}$ can be chosen.

At equilibrium $(\alpha _{t}^{\ast })_{t\in T}$ the following conditions
hold: 
\begin{equation}
\begin{array}{cl}
\alpha _{k}^{\ast }=1 & \text{when }\mathcal{F}_{k}((\alpha _{t}^{\ast
})_{t\in T})>0 \\ 
\alpha _{k}^{\ast }=0 & \text{when }\mathcal{F}_{k}((\alpha _{t}^{\ast
})_{t\in T})<0 \\ 
0<\alpha _{k}^{\ast }<1 & \text{when }\mathcal{F}_{k}((\alpha _{t}^{\ast
})_{t\in T})=0.%
\end{array}
\label{EquilibriumConditions}
\end{equation}%
The first theorem gives the non discriminating equilibria. That is,
equilibria $(\alpha _{t}^{\ast })_{t\in T}$ with $\alpha _{t}^{\ast }=\alpha
^{\ast }$ for any $t\in T$. The proofs of all theorems are given in the
Appendix.

\begin{theorem}
\label{TheoremNonDiscriminatingequilibrium}Consider a heterogeneous game $%
\Gamma (n,T,(x_{t})_{t\in T},y,z)$. $(\alpha ^{\ast },...,\alpha ^{\ast })$
is a symmetric equilibrium if and only if one of the following condition is
satisfied:

\begin{enumerate}
\item $y,z<0$, and $[\alpha ^{\ast }=\zeta $; $\alpha ^{\ast }=1$; or $%
\alpha ^{\ast }=0]$;

\item $y,z>0$, and $\alpha ^{\ast }=\zeta $.
\end{enumerate}
\end{theorem}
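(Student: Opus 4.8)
The plan is to evaluate the sign functions $\mathcal{F}_k$ of (\ref{FI(beta)}) along a candidate non-discriminating profile and observe that they all collapse to a single scalar expression, so that the whole system (\ref{EquilibriumConditions}) reduces to one condition on $\alpha^{\ast}$.

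First I would substitute $\beta_t=\alpha^{\ast}$ for every $t\in T$ into (\ref{FI(beta)}). Because $\sum_{i\in T}x_i=1$, the weighted average $\sum_{i\in T}x_i\beta_i$ equals $\alpha^{\ast}$, and therefore, for every $k\in T$,
\[
\mathcal{F}_k((\alpha^{\ast})_{t\in T})=(n-1)y+(y+z)\alpha^{\ast}-n(y+z)\alpha^{\ast}=(n-1)\bigl(y-(y+z)\alpha^{\ast}\bigr),
\]
a quantity independent of $k$. Hence all the inequalities in (\ref{EquilibriumConditions}) carry the same sign, and the profile $(\alpha^{\ast},\dots,\alpha^{\ast})$ is a symmetric equilibrium precisely when: $y-(y+z)\alpha^{\ast}>0$ forces $\alpha^{\ast}=1$; $y-(y+z)\alpha^{\ast}<0$ forces $\alpha^{\ast}=0$; and $y-(y+z)\alpha^{\ast}=0$ imposes no restriction, its unique solution being $\alpha^{\ast}=\zeta$, which lies in $[0,1]$ in both regimes.

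Second, I would run the self-consistency check, splitting on the sign of $y+z$ and using that $\zeta\in(0,1)$ (for $y,z<0$: $\zeta=y/(y+z)>0$ and $\zeta<1\Leftrightarrow z<0$; for $y,z>0$ it is immediate). In the coordination case $y,z<0$ we have $y+z<0$, so dividing $y-(y+z)\alpha^{\ast}>0$ by $y+z$ flips it to $\alpha^{\ast}>\zeta$; this forces $\alpha^{\ast}=1$, consistent since $1>\zeta$. Symmetrically $y-(y+z)\alpha^{\ast}<0\Leftrightarrow\alpha^{\ast}<\zeta$ forces and is consistent with $\alpha^{\ast}=0$; and $\alpha^{\ast}=\zeta$ gives $\mathcal{F}_k=0$. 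So exactly $\alpha^{\ast}\in\{0,\zeta,1\}$. In the anti-coordination case $y,z>0$ we have $y+z>0$, so $y-(y+z)\alpha^{\ast}>0\Leftrightarrow\alpha^{\ast}<\zeta$, which would force $\alpha^{\ast}=1$ --- impossible since $\zeta<1$; likewise $y-(y+z)\alpha^{\ast}<0$ would force $\alpha^{\ast}=0$ --- impossible since $\zeta>0$; so only $\alpha^{\ast}=\zeta$, with $\mathcal{F}_k=0$, survives. Both the ``if'' and the ``only if'' directions fall out of this case analysis at once, since every non-discriminating equilibrium must land in one of the three sign regimes.

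I do not anticipate a real obstacle: the crux is the elementary observation that $\mathcal{F}_k$ is type-independent on a non-discriminating profile, after which only careful inequality bookkeeping remains. The one point demanding attention is the direction of the inequality when dividing by $y+z$, whose sign is opposite in the two regimes --- that sign flip is exactly what produces three equilibria under coordination but a single one under anti-coordination. (The statement is read within the paper's standing assumption that the game is a coordination or an anti-coordination game; admissibility of the boundary values $0$ and $1$ as strategies in the coordination case is immediate.)
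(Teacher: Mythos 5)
Your proposal is correct and takes essentially the same approach as the paper: substitute the non-discriminating profile into $\mathcal{F}_k$ from (\ref{FI(beta)}), observe that it collapses to the type-independent quantity $(n-1)\bigl(y-(y+z)\alpha ^{\ast }\bigr)$, and then carry out the sign analysis dictated by (\ref{EquilibriumConditions}); the paper's proof does exactly this, merely organizing the cases by the candidate value of $\alpha ^{\ast }$ (plugging in $1$, $0$, or the interior value) rather than by sign regime of $y+z$. Your version is, if anything, a bit more explicit about both directions of the equivalence and the check that $\zeta \in (0,1)$ in each regime.
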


That is, if individuals do not take into account the different types, we
obtain the classical results. In coordination games there are three
equilibria, two in pure strategies and one in mixed strategies. In
anti-coordination games there is only one equilibrium, in mixed strategies.%
\footnote{%
The other two equilibria in pure strategies are not symmetric equilibria.}

In coordination games these equilibria are the only ones. Symmetric
discriminating equilibria only arise in anti-coordination games. The proof
of these results are based on the following lemma.

\begin{lemma}
\label{Lemmaalphai=alphaj}Consider a heterogeneous game $\Gamma
(n,T,(x_{t})_{t\in T},y,z)$. If there exists a symmetric equilibrium $%
(\alpha _{t}^{\ast })_{t\in T}$ with $0<\alpha _{i}^{\ast }<1$ and $0<\alpha
_{j}^{\ast }<1$ for $i,j\in T$ then $\alpha _{i}^{\ast }=\alpha _{j}^{\ast }$%
.
\end{lemma}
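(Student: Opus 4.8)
The plan is to exploit the equilibrium conditions \eqref{EquilibriumConditions} together with the explicit form of $\mathcal{F}_k$ in \eqref{FI(beta)}. Suppose $(\alpha_t^{\ast})_{t\in T}$ is a symmetric equilibrium with $0<\alpha_i^{\ast}<1$ and $0<\alpha_j^{\ast}<1$ for some $i,j\in T$. By the third line of \eqref{EquilibriumConditions}, an interior value of $\alpha_k^{\ast}$ can occur only when $\mathcal{F}_k((\alpha_t^{\ast})_{t\in T})=0$. Hence both $\mathcal{F}_i((\alpha_t^{\ast})_{t\in T})=0$ and $\mathcal{F}_j((\alpha_t^{\ast})_{t\in T})=0$.

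Next I would write out these two equations using \eqref{FI(beta)}. Since the term $(n-1)y - n(y+z)\sum_{t\in T}x_t\alpha_t^{\ast}$ does not depend on the index $k$, subtracting the two equations $\mathcal{F}_i=0$ and $\mathcal{F}_j=0$ leaves
\begin{equation*}
(y+z)\alpha_i^{\ast} - (y+z)\alpha_j^{\ast} = 0,
\end{equation*}
that is, $(y+z)(\alpha_i^{\ast}-\alpha_j^{\ast})=0$. Because the matrix \eqref{Matrix} is either a coordination matrix ($y,z<0$) or an anti-coordination matrix ($y,z>0$), in both cases $y+z\neq 0$, so we may divide through and conclude $\alpha_i^{\ast}=\alpha_j^{\ast}$.

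The argument is short and the only subtlety is the observation that the ``common'' part of $\mathcal{F}_k$ is genuinely independent of $k$, so that the difference $\mathcal{F}_i-\mathcal{F}_j$ collapses to a single term proportional to $\alpha_i^{\ast}-\alpha_j^{\ast}$; after that, the non-degeneracy $y+z\neq 0$ finishes it. There is no real obstacle here: the main point to be careful about is simply to invoke $y+z\neq 0$, which is guaranteed by the standing hypothesis that the game is of coordination or anti-coordination type. This lemma will then be used repeatedly in the theorems to force all types receiving interior treatment to be pooled into a single group, which is exactly what bounds the number of distinct treatment levels.
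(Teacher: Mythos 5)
Your proof is correct and follows essentially the same route as the paper's: both deduce $\mathcal{F}_i=\mathcal{F}_j=0$ from the interiority of $\alpha_i^{\ast},\alpha_j^{\ast}$ via \eqref{EquilibriumConditions}, subtract to get $(y+z)(\alpha_i^{\ast}-\alpha_j^{\ast})=0$, and conclude using $y+z\neq 0$. No gaps.
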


A direct consequence is that the probability of cooperation at equilibrium
can only take three different values: 0, 1 and an intermediate value.
Therefore there does not exist any $l$-discriminating equilibrium for $l>3$.

\begin{theorem}
\label{Theorem-NototallyDiscriminatingequilibrium}Consider a heterogeneous
game $\Gamma (n,T,(x_{t})_{t\in T},y,z)$. There is no totally discriminating
equilibrium for $m>3$.
\end{theorem}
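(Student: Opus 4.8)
The plan is to obtain Theorem~\ref{Theorem-NototallyDiscriminatingequilibrium} as a direct corollary of Lemma~\ref{Lemmaalphai=alphaj}, which already carries the substantive content; the theorem itself is a short counting argument. I would argue by contradiction. Suppose the game $\Gamma(n,T,(x_{t})_{t\in T},y,z)$ with $m>3$ admits a totally discriminating equilibrium $(\alpha_{t}^{\ast})_{t\in T}$. By the definition of a totally discriminating strategy, the numbers $\alpha_{1}^{\ast},\dots,\alpha_{m}^{\ast}$ are pairwise distinct, so the list contains at least $m\ge 4$ distinct values, each lying in $[0,1]$.

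Next I would classify the coordinates using the equilibrium conditions~(\ref{EquilibriumConditions}): for every $k$, either $\alpha_{k}^{\ast}=1$, or $\alpha_{k}^{\ast}=0$, or $0<\alpha_{k}^{\ast}<1$ (the last case occurring when $\mathcal{F}_{k}((\alpha_{t}^{\ast})_{t\in T})=0$). Because the $\alpha_{k}^{\ast}$ are pairwise distinct, at most one of them can equal $0$ and at most one can equal $1$. Hence at least $m-2\ge 2$ of the coordinates are interior, i.e.\ there exist $i\ne j$ in $T$ with $0<\alpha_{i}^{\ast}<1$ and $0<\alpha_{j}^{\ast}<1$. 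Applying Lemma~\ref{Lemmaalphai=alphaj} then forces $\alpha_{i}^{\ast}=\alpha_{j}^{\ast}$, contradicting the pairwise distinctness required of a totally discriminating equilibrium. Therefore no such equilibrium exists when $m>3$. I would add the stronger remark already flagged before the statement: the same argument shows the equilibrium probabilities of cooperation take at most three distinct values ($0$, $1$, and one intermediate value coming from $\mathcal{F}_{k}=0$), so there is in fact no $l$-discriminating equilibrium for any $l>3$, of which the theorem is the case $l=m$.

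I do not expect a genuine obstacle in this step: all the difficulty is concentrated in Lemma~\ref{Lemmaalphai=alphaj}, where one must show that two simultaneous interior conditions $\mathcal{F}_{i}((\alpha_{t}^{\ast})_{t\in T})=\mathcal{F}_{j}((\alpha_{t}^{\ast})_{t\in T})=0$ are incompatible with $\alpha_{i}^{\ast}\ne\alpha_{j}^{\ast}$ — which, reading off the explicit form~(\ref{FI(beta)}), reduces to the observation that $\mathcal{F}_{i}$ and $\mathcal{F}_{j}$ differ only in the term $(y+z)\alpha_{i}^{\ast}$ versus $(y+z)\alpha_{j}^{\ast}$, and $y+z\ne 0$ in both the coordination and anti-coordination cases. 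The only point to handle with a little care in writing up Theorem~\ref{Theorem-NototallyDiscriminatingequilibrium} itself is the justification that at most one coordinate equals $0$ and at most one equals $1$; this is immediate from total discrimination but deserves an explicit sentence so the reduction to Lemma~\ref{Lemmaalphai=alphaj} is airtight.
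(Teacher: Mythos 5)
Your proposal is correct and matches the paper's own argument: both reduce the theorem to Lemma~\ref{Lemmaalphai=alphaj}, observing that equilibrium conditions~(\ref{EquilibriumConditions}) allow only the values $0$, $1$, and (by the lemma) a single common interior value, so no more than three distinct probabilities can occur. Your contradiction/counting phrasing (at most one coordinate equal to $0$, one equal to $1$, hence two interior coordinates forced to coincide) is just a slightly more explicit write-up of the same idea.
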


Another less obvious consequence of the lemma is that there is no
discriminating equilibrium in coordination games.

\begin{theorem}
\label{PropCoordGames}Consider a heterogeneous game $\Gamma
(n,T,(x_{t})_{t\in T},y,z)$ with $y,z<0$. There is no discriminating
symmetric equilibrium.
\end{theorem}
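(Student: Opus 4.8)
The plan is to argue by contradiction, using Lemma \ref{Lemmaalphai=alphaj} to drastically restrict the possible values taken by the equilibrium strategy in a coordination game, and then to show that the only surviving possibilities are non-discriminating. First I would suppose $(\alpha_t^\ast)_{t\in T}$ is a symmetric equilibrium that is $l$-discriminating with $l\geq 2$. By Lemma \ref{Lemmaalphai=alphaj}, at most one of the distinct values $\alpha_1,\dots,\alpha_l$ can lie strictly between $0$ and $1$; the rest must equal $0$ or $1$. Since the $\alpha_k$ are pairwise distinct and lie in $[0,1]$, this forces $l\leq 3$, and in fact the value set must be a subset of $\{0,\,c,\,1\}$ for some $c\in(0,1)$, with at least two of these three values actually attained.

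Next I would examine the equilibrium conditions \eqref{EquilibriumConditions} together with the explicit form \eqref{FI(beta)} of $\mathcal{F}_j$. Write $\bar\beta = \sum_{i\in T}x_i\alpha_i^\ast$ for the average cooperation probability at equilibrium, so that $\mathcal{F}_j((\alpha_t^\ast)_{t\in T}) = (n-1)y + (y+z)\alpha_j^\ast - n(y+z)\bar\beta$. The key observation is monotonicity in $\alpha_j^\ast$: since $y+z<0$ in a coordination game, $\mathcal{F}_j$ is a \emph{strictly decreasing} affine function of $\alpha_j^\ast$ (with all other coordinates held fixed). Hence if $\alpha_j^\ast < \alpha_k^\ast$ then $\mathcal{F}_j((\alpha_t^\ast)_{t\in T}) > \mathcal{F}_k((\alpha_t^\ast)_{t\in T})$. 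Now I would derive a contradiction from this by checking each way two distinct values can coexist. If some type has value $1$, the equilibrium condition requires $\mathcal{F}\geq 0$ there; but then any type with a strictly smaller value has strictly larger $\mathcal{F}$, hence $\mathcal{F}>0$, which by \eqref{EquilibriumConditions} forces that type's value to be $1$ as well — contradiction. Symmetrically, if some type has value $0$, its $\mathcal{F}\leq 0$, so any type with strictly larger value has strictly smaller $\mathcal{F}<0$, forcing value $0$ — again a contradiction. Finally, if two types share the intermediate value $c\in(0,1)$ but a third differs, the third is either $0$ or $1$, which is excluded by the two cases just handled.

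The one remaining configuration to rule out is when exactly two distinct values appear and neither is forced into a contradiction by the above — i.e. one must double-check the endpoint cases where the ``strictly smaller/larger'' type might itself be an endpoint (value $0$ alongside value $1$, with no interior value). Here $\mathcal{F}$ at the $0$-type is $\leq 0$ and at the $1$-type is $\geq 0$; monotonicity gives that the $1$-type (larger $\alpha$) has the \emph{smaller} $\mathcal{F}$, so $0 \leq \mathcal{F}_{1\text{-type}} \leq \mathcal{F}_{0\text{-type}} \leq 0$, whence both equal $0$; but $\mathcal{F}=0$ at two points of a strictly monotone affine function is impossible unless the two $\alpha$-values coincide, contradicting distinctness. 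Thus no $l$-discriminating equilibrium with $l\geq 2$ exists when $y,z<0$, which is the claim. I expect the main obstacle to be the bookkeeping in this last step: one has to be careful that the monotonicity argument is applied with the \emph{same} background average $\bar\beta$ for all types (which it is, since $\bar\beta$ does not depend on $j$), and to handle the boundary equilibrium conditions ($\mathcal{F}\geq 0$ rather than $>0$ at a pure-cooperate type, etc.) correctly rather than sloppily treating them as strict.
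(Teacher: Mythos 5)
Your proof is correct and takes essentially the same route as the paper: Lemma \ref{Lemmaalphai=alphaj} limits the distinct equilibrium values to at most one interior value plus the endpoints $0,1$, and the contradiction in every configuration comes from the identity $\mathcal{F}_{i}((\alpha _{t}^{\ast })_{t\in T})-\mathcal{F}_{j}((\alpha _{t}^{\ast })_{t\in T})=(y+z)(\alpha _{i}^{\ast }-\alpha _{j}^{\ast })$ with $y+z<0$, which the paper applies in three explicit cases and you package as one monotonicity observation (smaller $\alpha^{\ast}$ implies larger $\mathcal{F}$). Your treatment of the boundary conditions with weak inequalities ($\mathcal{F}\geq 0$ at a pure-cooperate type, $\mathcal{F}\leq 0$ at a pure-defect type) is, if anything, slightly more careful than the paper's strict-inequality phrasing, and is equally valid.
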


The following theorems permit to find all discriminating equilibria in
anti-coordination games. For each partition $(T_{1},T_{2})$ of $T$, we
obtain one 2-discriminating equilibrium with $\alpha _{1}^{\ast }<\alpha
_{2}^{\ast }$.\footnote{%
The equilibrium with $\alpha _{1}^{\ast }>\alpha _{2}^{\ast }$ is obtained
for partition $(T_{2},T_{1})$ of $T$.}

\begin{theorem}
\label{Prop2-Discriminatingequilibrium}Consider a heterogeneous game $\Gamma
(n,T,(x_{t})_{t\in T},y,z)$ with $y,z>0$. Let ($T_{1},T_{2})$ be a partition
of $T$. The pair of strategies $((\alpha _{t}^{\ast })_{t\in T},(\alpha
_{t}^{\ast })_{t\in T})$ with 
\begin{equation*}
\alpha _{t}^{\ast }=\left\{ 
\begin{array}{cc}
\alpha _{1}^{\ast } & \text{if }t\in T_{1} \\ 
\alpha _{2}^{\ast } & \text{if }t\in T_{2}%
\end{array}%
\right. \text{ and }\alpha _{1}^{\ast }<\alpha _{2}^{\ast }\text{ }
\end{equation*}%
is an equilibrium of game $\Gamma (n,T,(x_{t})_{t\in T},y,z)$ if and only if
one of the following condition holds:

\begin{enumerate}
\item $\alpha _{1}^{\ast }=\frac{(n-1)\zeta -n\sum_{i\in T_{2}}x_{i}}{%
n\sum_{j\in T_{1}}x_{j}-1}$; $\alpha _{2}^{\ast }=1$; and $\sum_{i\in
T_{2}}x_{i}<\left( 1-\frac{1}{n}\right) \zeta $;

\item $\alpha _{1}^{\ast }=0$; $\alpha _{2}^{\ast }=1$; and $\left( 1-\frac{1%
}{n}\right) \zeta <\sum_{i\in T_{2}}x_{i}<\left( 1-\frac{1}{n}\right) \zeta +%
\frac{1}{n}$;

\item $\alpha _{1}^{\ast }=0$; $\alpha _{2}^{\ast }=\frac{(n-1)\zeta }{%
n\sum_{i\in T_{2}}x_{i}-1}$; and $\left( 1-\frac{1}{n}\right) \zeta +\frac{1%
}{n}<\sum_{i\in T_{2}}x_{i}.$
\end{enumerate}
\end{theorem}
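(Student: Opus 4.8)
Fix the partition $(T_1,T_2)$ and set $X_1=\sum_{i\in T_1}x_i$, $X_2=\sum_{i\in T_2}x_i$, so $X_1+X_2=1$; since each block is nonempty and every type satisfies $nx_t>1$, we have $nX_1>1$ and $nX_2>1$. The plan is to substitute the candidate strategy---in which $\alpha_t^{\ast}$ equals $\alpha_1^{\ast}$ on $T_1$ and $\alpha_2^{\ast}$ on $T_2$---into the function $\mathcal{F}_j$ of (\ref{FI(beta)}). Since $\sum_{i\in T}x_i\alpha_i^{\ast}=\alpha_1^{\ast}X_1+\alpha_2^{\ast}X_2$ is independent of $j$, the value $\mathcal{F}_j((\alpha_t^{\ast})_{t\in T})$ depends on $j$ only through the block containing $j$; write $F_1$ and $F_2$ for these two values. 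Using $y=(y+z)\zeta$ one gets $F_k=(y+z)\big[(n-1)\zeta+\alpha_k^{\ast}-n(\alpha_1^{\ast}X_1+\alpha_2^{\ast}X_2)\big]$, and since $y+z>0$ and $\alpha_1^{\ast}<\alpha_2^{\ast}$ this yields $F_1<F_2$.

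Next I would invoke Lemma~\ref{Lemmaalphai=alphaj}: because $\alpha_1^{\ast}\neq\alpha_2^{\ast}$, at most one of the two probabilities can be strictly interior, and together with $\alpha_1^{\ast}<\alpha_2^{\ast}$ (which excludes $\alpha_1^{\ast}=1$ and $\alpha_2^{\ast}=0$) this leaves exactly three configurations: $0<\alpha_1^{\ast}<\alpha_2^{\ast}=1$, $0=\alpha_1^{\ast}<\alpha_2^{\ast}=1$, and $0=\alpha_1^{\ast}<\alpha_2^{\ast}<1$. These are precisely the three cases of the statement, so it remains to show that each configuration is realizable as an equilibrium exactly under the corresponding displayed inequality, with the displayed value of the interior probability forced.

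The three cases are then settled using the equilibrium conditions (\ref{EquilibriumConditions}). In the first configuration, $\alpha_1^{\ast}$ interior forces $F_1=0$; solving (the denominator $nX_1-1$ being positive) gives $\alpha_1^{\ast}=\frac{(n-1)\zeta-nX_2}{nX_1-1}$, the inequality $\alpha_1^{\ast}<1$ reduces to $\zeta<1$ (always true for $y,z>0$), the inequality $\alpha_1^{\ast}>0$ reduces to $X_2<(1-\tfrac1n)\zeta$, and then $F_2=F_1+(y+z)(1-\alpha_1^{\ast})>0$ is automatic, so $\alpha_2^{\ast}=1$ is a best response and Case~1 follows as an equivalence. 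The third configuration is symmetric: $\alpha_1^{\ast}=0$ with $\alpha_2^{\ast}$ interior forces $F_2=0$, giving $\alpha_2^{\ast}=\frac{(n-1)\zeta}{nX_2-1}$, which is automatically positive, is below $1$ exactly when $X_2>(1-\tfrac1n)\zeta+\tfrac1n$, and makes $F_1=F_2-(y+z)\alpha_2^{\ast}=-(y+z)\alpha_2^{\ast}<0$, so $\alpha_1^{\ast}=0$ is forced---this is Case~3. In the second configuration $\alpha_1^{\ast}=0$ and $\alpha_2^{\ast}=1$, so the requirements $F_1\leq 0$ and $F_2\geq 0$ translate directly into $(1-\tfrac1n)\zeta\leq X_2\leq(1-\tfrac1n)\zeta+\tfrac1n$, which is Case~2 (at the two endpoint values the pair $(0,1)$ is the common limit of the adjacent interior formulas).

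I expect the main obstacle to be bookkeeping rather than anything conceptual: one must verify in each case that the slack equilibrium inequality on the block whose probability is pinned at $0$ or $1$ is automatically implied by the equality on the other block---so that the single scalar condition on $X_2$ is sufficient as well as necessary---and that the three ranges for $X_2$, together with the two threshold points, exhaust the admissible interval $X_2\in(\tfrac1n,1-\tfrac1n)$, which is what underlies the claim that each partition yields exactly one such equilibrium.
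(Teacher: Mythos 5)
Your proposal is correct and follows essentially the same route as the paper's proof: the same three-case split ($0<\alpha_1^{\ast}<1,\alpha_2^{\ast}=1$; $\alpha_1^{\ast}=0,\alpha_2^{\ast}=1$; $\alpha_1^{\ast}=0,0<\alpha_2^{\ast}<1$) justified by Lemma~\ref{Lemmaalphai=alphaj} and $\alpha_1^{\ast}<\alpha_2^{\ast}$, the same formulas obtained by setting the relevant $\mathcal{F}$ to zero, and the same threshold inequalities, with the remaining sign conditions checked via $F_2-F_1=(y+z)(\alpha_2^{\ast}-\alpha_1^{\ast})$ exactly as the paper does (you merely spell out the verifications the paper abbreviates as ``we can check''). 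The only minor deviation is your use of weak inequalities at the endpoints of Case~2, where the statement and conditions (\ref{EquilibriumConditions}) use strict ones; this affects only the boundary values of $\sum_{i\in T_2}x_i$ and not the substance of the argument.
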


Similarly, for each partition $(T_{1},T_{2},T_{3})$ of $T$, we obtain one
3-discriminating equilibrium with $\alpha _{1}^{\ast }<\alpha _{2}^{\ast
}<\alpha _{3}^{\ast }$.

\begin{theorem}
\label{Theorem3-Discriminatingequilibrium}Consider a heterogeneous game $%
\Gamma (n,T,(x_{t})_{t\in T},y,z)$ with $y,z>0$. Let $(T_{1},T_{2},T_{3})$
be a partition of $T$. The pair of strategies $((\alpha _{t}^{\ast })_{t\in
T},(\alpha _{t}^{\ast })_{t\in T})$ with 
\begin{equation*}
\alpha _{t}^{\ast }=\left\{ 
\begin{array}{cc}
0 & \text{if }t\in T_{1} \\ 
\frac{(n-1)\zeta -n\sum_{\substack{ k\in T_{3}}}x_{k}}{n\sum_{\substack{ %
j\in T_{2}}}x_{j}-1} & \text{if }t\in T_{2} \\ 
1 & \text{if }t\in T_{3}%
\end{array}%
\right.
\end{equation*}%
is an equilibrium of game $\Gamma (n,T,(x_{t})_{t\in T},y,z)$ if and only if 
\begin{equation*}
\sum_{k\in T_{3}}x_{k}<\left( 1-\frac{1}{n}\right) \zeta \ \text{and }%
\sum_{i\in T_{1}}x_{i}<\left( 1-\frac{1}{n}\right) \left( 1-\zeta \right) .
\end{equation*}
\end{theorem}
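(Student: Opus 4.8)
The plan is to verify the equilibrium conditions (\ref{EquilibriumConditions}) for the proposed profile by reading off the sign of the functions $\mathcal{F}_k$ in (\ref{FI(beta)}). Since $y,z>0$ we have $y+z>0$, so I would divide through by $y+z$ and work with $\mathcal{F}_k((\alpha_t^{\ast})_{t\in T})/(y+z)=(n-1)\zeta+\alpha_k^{\ast}-n\sum_{i\in T}x_i\alpha_i^{\ast}$. For the proposed profile, writing $X_j=\sum_{i\in T_j}x_i$ (so $X_1+X_2+X_3=1$), one has $\sum_{i\in T}x_i\alpha_i^{\ast}=\alpha_2^{\ast}X_2+X_3$ with $\alpha_2^{\ast}=\frac{(n-1)\zeta-nX_3}{nX_2-1}$. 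A preliminary remark: since $(T_1,T_2,T_3)$ is a genuine partition, $T_2$ contains at least one type $t$ with $nx_t>1$, so $nX_2-1>0$; this guarantees $\alpha_2^{\ast}$ is well defined and will also be what converts the bounds on $\alpha_2^{\ast}$ below into the two displayed inequalities.

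For sufficiency, assume $X_3<(1-\tfrac1n)\zeta$ and $X_1<(1-\tfrac1n)(1-\zeta)$. Using $nX_2-1>0$ and $X_2=1-X_1-X_3$, I would first check that these two inequalities are precisely equivalent to $0<\alpha_2^{\ast}<1$, so the profile is a legitimate strategy with three distinct ordered values. Then I would dispatch the three cases of (\ref{EquilibriumConditions}): (a) for $t\in T_2$, $\mathcal{F}_t((\alpha_t^{\ast})_{t\in T})=0$ holds because $\alpha_2^{\ast}$ was defined exactly to solve $(n-1)\zeta+\alpha_2^{\ast}-n(\alpha_2^{\ast}X_2+X_3)=0$; (b) for $t\in T_1$, substituting $\alpha_t^{\ast}=0$ and then using that identity in the form $(n-1)\zeta-n(\alpha_2^{\ast}X_2+X_3)=-\alpha_2^{\ast}$ yields $\mathcal{F}_t((\alpha_t^{\ast})_{t\in T})/(y+z)=-\alpha_2^{\ast}<0$, so $\alpha_t^{\ast}=0$ is a best reply; (c) for $t\in T_3$, the same substitution yields $\mathcal{F}_t((\alpha_t^{\ast})_{t\in T})/(y+z)=1-\alpha_2^{\ast}>0$, so $\alpha_t^{\ast}=1$ is a best reply. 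Hence the profile is a symmetric equilibrium.

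For necessity, I would run the computation backwards. If the proposed profile is an equilibrium with $\alpha_1^{\ast}<\alpha_2^{\ast}<\alpha_3^{\ast}$, then $0<\alpha_2^{\ast}<1$, so (\ref{EquilibriumConditions}) forces $\mathcal{F}_t=0$ on $T_2$ (consistent with $\alpha_2^{\ast}$ being the stated ratio, and with Lemma \ref{Lemmaalphai=alphaj}, which already says that any two interior coordinates coincide), $\mathcal{F}_t\le0$ on $T_1$ and $\mathcal{F}_t\ge0$ on $T_3$. By the identities in (b) and (c), the last two read $-\alpha_2^{\ast}\le0$ and $1-\alpha_2^{\ast}\ge0$, i.e. $0\le\alpha_2^{\ast}\le1$; combined with strict ordering of the three values this is $0<\alpha_2^{\ast}<1$, and unwinding these bounds through $nX_2-1>0$ and $X_2=1-X_1-X_3$ returns exactly $X_3<(1-\tfrac1n)\zeta$ and $X_1<(1-\tfrac1n)(1-\zeta)$.

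The algebra is routine; the only delicate points are the sign of $nX_2-1$ (needed both to clear the denominator and to flip $\alpha_2^{\ast}\ge0$ into a bound on $X_3$) and the strictness bookkeeping — at the boundary values $\alpha_2^{\ast}\in\{0,1\}$ the profile degenerates to a $2$-discriminating equilibrium already covered by Theorem \ref{Prop2-Discriminatingequilibrium}, which is why the inequalities in the statement are strict. I do not anticipate any genuine obstacle beyond keeping these signs straight.
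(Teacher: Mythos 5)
Your proposal is correct and follows essentially the same route as the paper: solve $\mathcal{F}_t=0$ on $T_2$ for the interior value, translate $0<\alpha_2^{\ast}<1$ (using $n\sum_{j\in T_2}x_j-1>0$) into the two displayed inequalities, and use the differences $\mathcal{F}_{T_1}-\mathcal{F}_{T_2}=-(y+z)\alpha_2^{\ast}$ and $\mathcal{F}_{T_3}-\mathcal{F}_{T_2}=(1-\alpha_2^{\ast})(y+z)$ to get the strict sign conditions on $T_1$ and $T_3$. Your explicit remarks on the sign of the denominator and the boundary degeneration to the $2$-discriminating case are consistent with (and slightly more careful than) the paper's argument.
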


Note that the discriminating strategies at the equilibrium always include at
least one pure action. Another point worth to mention is that, when\textbf{\ 
}$T_{2}\neq \emptyset $ in Theorem \ref{Theorem3-Discriminatingequilibrium},
we obtain the results of Theorem \ref{Prop2-Discriminatingequilibrium} by
either setting $T_{1}=\emptyset $ or $T_{3}=\emptyset $. By contrast when $%
T_{2}=\emptyset $, Theorem \ref{Prop2-Discriminatingequilibrium} is not a
special case of Theorem\textbf{\ }\ref{Theorem3-Discriminatingequilibrium}%
\textbf{.}

\section{Discussion}

The main result of this paper is that when we have four types of players or
more in a heterogeneous game, there are no totally discriminating
equilibria: the maximum number of partitions where the treatment differs is
three. The Nash equilibria found in our results may be related to the
mechanism of human individual decision and the working memory, i.e., the few
temporarily active thoughts. The working memory is used in mental tasks,
problem solving and planning. In Cowan (2010), it has been discussed why the
number of items that an individual can store in the memory and remember for
a short period of time is around three, despite the reasons for that fact
remaining unclear in psychological science. The latter is in line with our
results on the ability to discriminate.

Such link between our theoretical results and those empirical ones discussed
in the latter provide an additional bridge between the fields of classic and
evolutionary game theory. Our results show that individuals are rationally
able to differentiate at most three partitions of types or three sets of
information. As Arthur (1994) points out, under complicated problems, the
type of rationality assumed in classic economics demands much of human
behavior and breaks down. Beyond a certain level of complexity, human
logical capacity ceases to cope and psychologists tend to agree that humans
think inductively with bounded rationality, simplifying the problem (Bower
and Hilgard, 1981; Holland et al., 1986; Rumelhart, 1980; Schank and
Abelson, 1977).

Thus, on the one hand, economic agents do rationally maximize their utility
or profit functions, on the other hand, the collection of information on the
possible ways that the utility function can be derived and built might be
too large for an individual to deal with, making him unable to identify all
the possibilities and ending up with a narrower set of strategies available
to choose from. As a consequence, when an agent chooses to play some
strategy, despite the fact that he selects the one that rationally maximizes
his utility, he is not fully aware if he is maximizing or not the utility
function that provides him with the largest possible maximized profit. This
creates room for the so-called bounded rationality in the literature. The
role of natural selection then links evolutionary and classic game theory in
dynamic models such as the replicator dynamics by selecting the
strategy(ies) which profit maximizing function(s) outperform(s) in the long
run, when the static stage-game is repeatedly played over time. Such
adaptive process replaces profit maximization at the individual level in
classic static games with profit maximization at the overall population
level in evolutionary dynamic games.

Certainly, other examples in different fields of science can be found and
related to our results, although in a further research paper we suggest the
focus might be on understanding how the results and the ways of
discriminating change when the players are aware of their own type as well.

\section*{Acknowledgments}

\noindent Laruelle acknowledges financial support from the Spanish Ministry
of Science and Innovation under funding PID2019-106146-I00 and from the
Basque Government (Research Group IT1367-19); Rocha acknowledges financial
support from the Brazilian Ministry of Science, Technology and Innovations
(CNPq funding 307437/2019-1).

\pagebreak

\section*{Appendix:}

\textbf{Equation (\ref{Ut2})}: Substituting (\ref{Probabilities}) in (\ref%
{Ut}), we obtain

\begin{eqnarray*}
U((\alpha _{t},\beta _{t})_{t\in T}) &=&\sum_{j\in T}\sum_{\substack{ i\in T 
\\ i\neq j}}\frac{nx_{i}x_{j}}{n-1}\left[ z\text{ }\beta _{i}+y\text{ }%
\alpha _{j}-(y+z)\beta _{i}\alpha _{j}\right] \\
&&+\sum_{j\in T}\frac{(nx_{j}-1)x_{j}}{n-1}\left[ z\text{ }\beta _{j}+y\text{
}\alpha _{j}-(y+z)\beta _{j}\alpha _{j}\right] \\
&=&\frac{1}{n-1}\sum_{j\in T}x_{j}\left[ \sum_{i\in T}nx_{i}-1\right] z\text{
}\beta _{j}+\frac{1}{n-1}\sum_{j\in T}x_{j}\left[ \sum_{i\in T}\text{ }%
nx_{i}-1\right] y\text{ }\alpha _{j} \\
&&-\frac{1}{n-1}\sum_{j\in T}n\text{ }x_{j}\text{ }(y+z)\left[ \sum_{j\in
T}x_{j}\beta _{j}\right] \alpha _{j}+\frac{1}{n-1}\sum_{j\in T}\text{ }%
x_{j}(y+z)\text{ }\beta _{j}\text{ }\alpha _{j} \\
&=&z\sum_{j\in T}x_{j}\text{ }\beta _{j}+\sum_{j\in T}\frac{x_{j}}{n-1}\left[
(n-1)y+(y+z)\beta _{j}-n(y+z)\sum_{i\in T}x_{i}\beta _{i}\right] \alpha _{j}.
\end{eqnarray*}%
\vspace{0.35cm}

\begin{proof}[\textbf{Proof of Theorem }\protect\ref%
{TheoremNonDiscriminatingequilibrium}]
Let $(\alpha ^{\ast },...,\alpha ^{\ast })$ be a non discriminating
equilibrium. By (\ref{FI(beta)}) we have $\mathcal{F}_{i}(\alpha ^{\ast
},...,\alpha ^{\ast })=(n-1)\left[ y-(y+z)\alpha ^{\ast }\right] $. We have $%
\alpha ^{\ast }=1$\ when $\mathcal{F}_{i}((\alpha _{t}^{\ast })_{t\in T})>0$
, i.e. $z<0$. We have $\alpha ^{\ast }=0$\ when $\mathcal{F}_{i}((\alpha
_{t}^{\ast })_{t\in T})<0$, i.e. $y<0$. If $0<\alpha ^{\ast }<1$ then $%
\mathcal{F}_{i}((\alpha _{t}^{\ast })_{t\in T})=0$ leads to $y-(y+z)\alpha
^{\ast }=0$, i.e. $\alpha ^{\ast }=\zeta $.
\end{proof}

\vspace{0.35cm}

\begin{proof}[Proof of Lemma \protect\ref{Lemmaalphai=alphaj}]
We must have $\mathcal{F}_{i}((\alpha _{t}^{\ast })_{t\in T})=0$ and $%
\mathcal{F}_{j}((\alpha _{t}^{\ast })_{t\in T})=0$. By (\ref{FI(beta)}) we
obtain $\mathcal{F}_{i}((\alpha _{t}^{\ast })_{t\in T})-\mathcal{F}
_{j}((\alpha _{t}^{\ast })_{t\in T})=(y+z)(\alpha _{i}^{\ast }-\alpha
_{j}^{\ast })=0.$ Given that $y+z\neq 0$, the only possibility is that $%
\alpha _{i}^{\ast }=\alpha _{j}^{\ast }$.
\end{proof}

\vspace{0.35cm}

\begin{proof}[Proof of \textbf{Theorem} \protect\ref%
{Theorem-NototallyDiscriminatingequilibrium}]
By Lemma \ref{Lemmaalphai=alphaj}, $\alpha _{i}^{\ast }=\alpha _{j}^{\ast
}=\alpha ^{\ast }$ with $0<\alpha ^{\ast }<1$ whenever $\mathcal{F}
_{i}((\alpha _{t}^{\ast })_{t\in T})=\mathcal{F}_{j}((\alpha _{t}^{\ast
})_{t\in T})=0$. The other values that $\alpha _{i}^{\ast }$ can take are $0$
(whenever $\mathcal{F}_{i}((\alpha _{t}^{\ast })_{t\in T})<0$) or $1$
(whenever $\mathcal{F}_{i}((\alpha _{t}^{\ast })_{t\in T})>0$). Given that $%
\alpha _{t}^{\ast }$ can only take three values: $0$, $1$ and $\alpha ^{\ast
}$, there is no totally discriminating equilibrium for $m>3$.
\end{proof}

\vspace{0.35cm}

\begin{proof}[Proof of \textbf{Theorem }\protect\ref{PropCoordGames}]
Suppose that there exists a discriminating symmetric equilibrium $(\alpha
_{t}^{\ast })_{t\in T}$ with $\alpha _{i}^{\ast }\neq \alpha _{j}^{\ast }$
for some $i,j\in T$. By Lemma \ref{Lemmaalphai=alphaj} $\alpha _{i}^{\ast
}\neq \alpha _{j}^{\ast }$ only holds in three cases: (1) $\alpha _{i}^{\ast
}=1$ and $\alpha _{j}^{\ast }=0$; (2) $\alpha _{i}^{\ast }=1$ and $0<\alpha
_{j}^{\ast }<1$; and (3) $\alpha _{i}^{\ast }=0$ and $0<\alpha _{j}^{\ast
}<1 $. (1) Suppose that $\alpha _{i}^{\ast }=1$ and $\alpha _{j}^{\ast }=0$.
Then we should have $\mathcal{F}_{i}((\alpha _{t}^{\ast })_{t\in
T})=(n-1)y+(y+z)-n(y+z)\sum_{t\in T}x_{t}\alpha _{t}^{\ast }>0$ and $%
\mathcal{F}_{j}((\alpha _{t}^{\ast })_{t\in T})=(n-1)y-n(y+z)\sum_{t\in
T}x_{t}\alpha _{t}^{\ast }<0$. This is impossible given that $y+z<0$. (2)
Suppose that $\alpha _{i}^{\ast }=1$ and $0<\alpha _{j}^{\ast }<1$. Then we
should have $\mathcal{F}_{i}((\alpha _{t}^{\ast })_{t\in
T})=(n-1)y+(y+z)-n(y+z)\sum_{t\in T}x_{t}\alpha _{t}^{\ast }>0$ and $%
\mathcal{F}_{j}((\alpha _{t}^{\ast })_{t\in T})=(n-1)y+(y+z)\alpha
_{j}^{\ast }-n(y+z)\sum_{t\in T}x_{t}\alpha _{t}^{\ast }=0$. This requires $%
(y+z)(1-\alpha _{j}^{\ast })>0$, which is impossible given that $y+z<0$ and $%
1-\alpha _{j}^{\ast }>0$. (3) Suppose that $\alpha _{i}^{\ast }=0$ and $%
0<\alpha _{j}^{\ast }<1$. Then we should have $\mathcal{F}_{i}((\alpha
_{t}^{\ast })_{t\in T})=(n-1)y-n(y+z)\sum_{t\in T}x_{t}\alpha _{t}^{\ast }<0$
and $\mathcal{F}_{j}((\alpha _{t}^{\ast })_{t\in T})=(n-1)y+(y+z)\alpha
_{j}^{\ast }-n(y+z)\sum_{t\in T}x_{t}\alpha _{t}^{\ast }=0$. This requires $%
(y+z)\alpha _{j}^{\ast }>0$, which is again impossible given that $y+z<0$
and $\alpha _{j}^{\ast }>0$. In order to have a discriminating symmetric
equilibrium, a necessary condition is that $y+z>0$.
\end{proof}

\vspace{0.35cm}

\begin{proof}[Proof of \textbf{Theorem }\protect\ref%
{Prop2-Discriminatingequilibrium}]
Suppose that there exists a 2-discriminating symmetric equilibrium $(\alpha
_{t}^{\ast })_{t\in T}$ with $\alpha _{t}^{\ast }=\alpha _{1}^{\ast }$ for $%
t\in T_{1}$ and $\alpha _{t}^{\ast }=\alpha _{2}^{\ast }$ for $t\in T_{2}$
and $\alpha _{1}^{\ast }<\alpha _{2}^{\ast }$. There are only three possible
cases: (1) $0<\alpha _{1}^{\ast }<1$ and $\alpha _{2}^{\ast }=1$; (2) $%
\alpha _{1}^{\ast }=0$ and $\alpha _{2}^{\ast }=1$; (3) $\alpha _{1}^{\ast
}=0$ and $0<\alpha _{2}^{\ast }<1$. (1) We have $0<\alpha _{1}^{\ast }<1$
and $\alpha _{2}^{\ast }=1$ if $\mathcal{F}_{j}((\alpha _{t}^{\ast })_{t\in
T})=0$ for $j\in T_{1}$, which gives by (\ref{FI(beta)}) $\alpha _{1}^{\ast
}=\frac{(n-1)\zeta -n\sum_{i\in T_{2}}x_{i}}{n\sum_{j\in T_{1}}x_{j}-1},$
and $\mathcal{F}_{i}((\alpha _{t}^{\ast })_{t\in T})>0$ for $i\in T_{2}$.
The condition $\alpha _{1}^{\ast }>0$ gives $\sum_{i\in T_{2}}x_{i}<\left( 1-%
\frac{1}{n}\right) \zeta $. We can check that $\alpha _{1}^{\ast }<1$ and $%
\mathcal{F}_{i}((\alpha _{t}^{\ast })_{t\in T})>0$ for $i\in T_{2}$. (2)
Similarly we have $\alpha _{1}^{\ast }=0$ if $\mathcal{F}_{j}((\alpha
_{t}^{\ast })_{t\in T})<0$ for $j\in T_{1}$, which requires $\sum_{i\in
T_{2}}x_{i}>\left( 1-\frac{1}{n}\right) \zeta $. Similarly we have $\alpha
_{2}^{\ast }=1$ if $\mathcal{F}_{i}((\alpha _{t}^{\ast })_{t\in T})>0$ for $%
i\in T_{2},$ which gives $\sum_{i\in T_{2}}x_{i}<\left( 1-\frac{1}{n}\right)
\zeta +\frac{1}{n}.$ Plugging both conditions together, $\left( 1-\frac{1}{n}%
\right) \zeta <\sum_{i\in T_{2}}x_{i}<\left( 1-\frac{1}{n}\right) \zeta +%
\frac{1}{n}$. (3) We have $\alpha _{1}^{\ast }=0$ and $0<\alpha _{2}^{\ast
}<1$ if $\mathcal{F}_{i}((\alpha _{t}^{\ast })_{t\in T})=0$ for $i\in T_{2}$
and $\mathcal{F}_{j}((\alpha _{t}^{\ast })_{t\in T})<0$ for $j\in T_{1}$.
The equality $\mathcal{F}_{i}((\alpha _{t}^{\ast })_{t\in T})=0$ leads to $%
\alpha _{2}^{\ast }=\frac{(n-1)\zeta }{n\sum_{i\in T_{2}}x_{i}-1}$. Clearly $%
\alpha _{2}^{\ast }>0\ $ and the inequality $\alpha _{2}^{\ast }<1$ requires 
$\left( 1-\frac{1}{n}\right) \zeta +\frac{1}{n}<\sum_{i\in T_{2}}x_{i}$. We
can easily check that $\mathcal{F}_{j}((\alpha _{t}^{\ast })_{t\in T})<0$
for $j\in T_{1}$.
\end{proof}

\vspace{0.35cm}

\begin{proof}[Proof of \textbf{Theorem }\protect\ref%
{Theorem3-Discriminatingequilibrium}]
Suppose that there exists a discriminating symmetric equilibrium $((\alpha
_{t}^{\ast })_{t\in T},(\alpha _{t}^{\ast })_{t\in T})$ with $\alpha
_{t}^{\ast }=\alpha _{1}^{\ast }$ for $t\in T_{1}$ and $\alpha _{t}^{\ast
}=\alpha _{2}^{\ast }$ for $t\in T_{2};$ and $\alpha _{t}^{\ast }=\alpha
_{3}^{\ast }$ for $t\in T_{3}$; $\alpha _{1}^{\ast }<\alpha _{2}^{\ast
}<\alpha _{3}^{\ast }
$. Then there is only one possibility: $\alpha _{1}^{\ast }=0\ $and $\alpha
_{2}^{\ast }=\lambda ^{\ast }$ and $\alpha _{3}^{\ast }=1$. The following
conditions must hold: (i) $\mathcal{F}_{i}((\alpha _{t}^{\ast })_{t\in T})<0$
for $i\in T_{1}$; (ii) $\mathcal{F}_{j}((\alpha _{t}^{\ast })_{t\in T})=0$
for $j\in T_{2}$; and (iii) $\mathcal{F}_{k}((\alpha _{t}^{\ast })_{t\in
T})>0$ for $k\in T_{3}$. Condition $\mathcal{F}_{j}((\alpha _{t}^{\ast
})_{t\in T})=0$ for $j\in T_{2}$ gives $(n-1)y+(y+z)\lambda ^{\ast
}-n(y+z)\sum_{t\in T}x_{t}\alpha _{t}^{\ast }=0$. After some algebra, we
obtain $\lambda ^{\ast }=\frac{(n-1)\zeta -n\sum_{k\in T_{3}}x_{k}}{%
n\sum_{j\in T_{2}}x_{j}-1}$. The condition $\lambda ^{\ast }<1$ can be
written as $(n-1)\zeta -n\sum_{k\in T_{3}}x_{k}<n\sum_{j\in T_{2}}x_{j}-1.$
This gives the condition $\sum_{i\in T_{1}}x_{i}<\left( 1-\frac{1}{n}\right)
(1-\zeta )$. Condition $\lambda ^{\ast }>0$ can be written as $(n-1)\zeta
-n\sum_{k\in T_{3}}x_{k}>0$ or $\sum_{k\in T_{3}}x_{k}<\left( 1-\frac{1}{n}%
\right) \zeta $. Finally two conditions remain to be checked: $\mathcal{F}%
_{i}((\alpha _{t}^{\ast })_{t\in T})<0$ for $i\in T_{1}$ and $\mathcal{F}%
_{k}((\alpha _{t}^{\ast })_{t\in T})>0$ for $k\in T_{3}$. Note that for $%
i\in T_{1}$, $j\in T_{2}$ and $k\in T_{3}$ we have $\mathcal{F}_{i}((\alpha
_{t}^{\ast })_{t\in T})-\mathcal{F}_{j}((\alpha _{t}^{\ast })_{t\in
T})=-(y+z)\lambda ^{\ast }<0$ and $\mathcal{F}_{k}((\alpha _{t}^{\ast
})_{t\in T})-\mathcal{F}_{j}((\alpha _{t}^{\ast })_{t\in T})=(1-\lambda
^{\ast })(y+z)>0.$ Given that $\mathcal{F}_{j}((\alpha _{t}^{\ast })_{t\in
T})=0$, we have $\mathcal{F}_{i}((\alpha _{t}^{\ast })_{t\in T})<0$ for $%
i\in T_{1}$ and $\mathcal{F}_{k}((\alpha _{t}^{\ast })_{t\in T})>0$ for $%
k\in T_{3}$.
\end{proof}

\end{document}